\def\F {{\mathbb{F}}}
\def\GRS{{\mathcal GRS}}
\def\ba{{\bf a}}
\def\bb{{\bf b}}
\def\bc{{\bf c}}
\def\bbc{\bar{\bc}}
\def\bu{{\bf u}}
\def\bv{{\bf v}}
\def\bw{{\bf w}}
\def\bo{{\bf 0}}
\def\bi{{\bf 1}}
\def\bx{{\bf x}}
\def\a{{\alpha}}
\def\Ga{{\alpha}}
\def\Gb{{\beta}}
\def\res{{\rm res}}
\def\beq{\begin{equation}}
\def\eeq{\end{equation}}
\newtheorem{thm}{Theorem}[section]
\newtheorem{lem}[thm]{Lemma}
\newtheorem{cor}[thm]{Corollary}
\numberwithin{equation}{section} 
\newtheorem{exm}[thm]{Example}
\newtheorem{rmk}[thm]{Remark}
\begin{document}

\onecolumn

\title{New  MDS Self-Dual Codes from Generalized Reed-Solomon Codes }

\author{Lingfei Jin and Chaoping Xing

\thanks{Lingfei Jin is with Shanghai Key Laboratory of Intelligent Information Processing, School of Computer Science, Fudan University, Shanghai 200433, China. {\it email:} {lfjin@fudan.edu.cn}. }
\thanks{Chaoping Xing is with Division of
Mathematical Sciences, School of Physical \& Mathematical Sciences,
Nanyang Technological University, Singapore 637371, Republic of
Singapore. {\it email:} {xingcp@ntu.edu.sg}
}
\thanks{Lingfei Jin is supported by the Open Research Fund of National Mobile Communications Research Laboratory, Southeast University (No. 2017D07), and by the National Natural Science Foundation of China under Grant 11501117.}
\thanks{C. Xing is supported by the Singapore Ministry of Education Tier 1 under Grant
RG20/13.
}}

\maketitle

\begin{abstract} Both MDS and Euclidean self-dual codes have theoretical and practical importance and the study of MDS self-dual codes has attracted lots of attention in recent years.
In particular, determining existence of $q$-ary MDS self-dual codes for various lengths has been investigated extensively. The problem is  completely solved for the case where $q$ is even.  The current paper focuses on the case where $q$ is odd. We construct a few classes of new MDS self-dual codes through generalized Reed-Solomon codes. More precisely, we show that for any given even length $n$ we have a $q$-ary MDS code as long as $q\equiv1\bmod{4}$ and $q$ is sufficiently large (say $q\ge 4^n\times n^2)$. Furthermore, we prove that there exists a $q$-ary MDS self-dual code of length $n$ if $q=r^2$ and $n$ satisfies one of the three conditions: (i) $n\le r$ and $n$ is even; (ii) $q$ is odd and $n-1$ is an odd divisor of $q-1$;   (iii) $r\equiv3\mod{4}$ and $n=2tr$ for any $t\le (r-1)/2$.
\end{abstract}

\begin{keywords}
Self-dual codes, MDS codes, Generalized Reed-Solomon codes.
\end{keywords}

\section{Introduction}
MDS codes and Euclidean self-dual codes belong to two different categories of block codes. Both classes are of practical and theoretical importance. In recent years,
study of MDS self-dual codes (we only consider Euclidean inner product in the following context) has attracted a lot of attention \cite{AKL08,BBDW04,BGGHK03,GK02,GG08,G12,HK06,KL04a,KL04}. First of all, MDS codes achieve optimal parameters that allow correction of maximal number of errors for a given code rate. Study of various properties of MDS codes, such as classification \cite{KKP15,PD91} of MDS codes, non-Reed-Solomon MDS codes \cite{RL89}, balanced MDS codes \cite{DSDY}, lowest density MDS codes \cite{BR99,LR06} and existence of MDS codes \cite{DSY14}, has been the center of the area.   In addition, MDS codes are closely connected to combinatorial design and finite geometry \cite[Chapters 11 and 14]{LX}. Furthermore, the generalized Reed-Solomon codes are a class of MDS codes and have found wide applications in practice. On the other hand, due to their nice structures, self-dual codes have been  attracting attention from both coding theorists, cryptographers and mathematicians. Self-dual codes have found various applications in cryptography (in particular secret sharing) \cite{CD08,DMP08,M95}  and combinatorics \cite{LX}.  Thus, it is natural to consider the intersection of these two classes, namely, MDS self-dual codes.

As the parameters of an MDS self-dual code is completely determined by its length,
one of the central problems in this topic is
to determine existence of MDS self-dual codes for various lengths. The problem is  completely solved for the case where $q$ is even \cite{GG08}.  The current paper focuses on the case where $q$ is odd. Our idea is to construct generalized Reed-Solomon code that are self-dual. Thus, the result is of theoretical interest and practical relevance.
\subsection{Known results}
One of the existing constructions of  MDS self-dual codes in literature is through constacyclic codes \cite{AKL08,GG08,KL04} because the generator polynomial of the dual code of a constacyclic code can be determined by the generator polynomial of the code. Some other approaches include orthogonal designs \cite{BGGHK03,GK02} and generalized Reed-Solomon codes  \cite{AKL08}. We summarize some  known results in the  Table I.


\begin{table}
\caption{Known results on existence of $q$-ary MDS self-dual codes of even length $n$ }
\label{table:1}
\begin{center}
\begin{tabular}{||c|c|c|c|c|c||}\hline
$q$ & $2|q$ & $2\not|q$  & $q=r^t$, $t$ even &$q=r^t$, $r\equiv3\bmod{4}$, $t$ odd &$q=r^t$, $r\equiv1\bmod{4}$, $t$ odd\\  \hline
$n$ & $n\le q$ & $n=q+1$ & $(n-1)|(r-1)$ & $n=p^m+1$, odd $m$ and prime $p$  & $n=p^m+1$, $m$ odd and prime $p$  \\
&&& & and $p\equiv3\bmod{4}$ &and $p\equiv1\bmod{4}$ \\  \hline
Reference &\cite{GG08}&\cite{GG08}&\cite{G12}&\cite{G12}&\cite{G12}\\ \hline
\end{tabular}
\end{center}
\end{table}

Besides the  results in Table I, only some sparse  lengths $n$ of MDS self-dual codes  have been found (see \cite{AKL08,BBDW04,BGGHK03,GK02,HK06,KL04}).

\subsection{Our results}
We show the following result in this paper.

\begin{thm}[Main Theorem]\label{thm:1.1} Let $q$ be an odd prime power and let $n$ be an even positive integer. Then there exists a $q$-ary MDS self-dual code of length $n$ if $q$ and $n$ satisfy one of the following conditions
\begin{itemize}
\item[{\rm (i)}] $q\equiv1\bmod{4}$ and  $q\ge 4^n\times n^2$ {\rm (see Theorem \ref{thm:3.2}(ii))};
  \item[{\rm (ii)}] $q=r^2$ and $n\le r$ {\rm (see Theorem \ref{thm:3.4}(i))};
 \item[{\rm (iii)}]  $q=r^2$ and $n-1$ is a divisor of $q-1$ {\rm (see Theorem \ref{thm:3.4}(ii))};
   \item[{\rm (iv)}] $q=r^2$, $r\equiv3\mod{4}$ and $n=2tr$ for any $t\le (r-1)/2$ {\rm (see Theorem \ref{thm:3.5})}.
\end{itemize}
\end{thm}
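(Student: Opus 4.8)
The four parts share a common engine: the standard criterion for a generalized Reed--Solomon code to be Euclidean self-dual. For distinct evaluation points $\ba=(a_1,\dots,a_n)$ and nonzero multipliers $\bv=(v_1,\dots,v_n)$, the code $\GRS_{n/2}(\ba,\bv)$ is MDS of length $n$, and its dual is again a GRS code on the same $\ba$; a direct computation shows the two coincide precisely when there is a single constant $\lambda\in\Fq^*$ with $v_i^2=\lambda u_i$ for all $i$, where $u_i=\prod_{j\neq i}(a_i-a_j)^{-1}$. Writing $\chi$ for the quadratic character of $\Fq$, such multipliers $v_i\in\Fq^*$ exist if and only if the $u_i$ lie in a single square class, i.e. $\chi(u_1)=\cdots=\chi(u_n)$; one then picks $\lambda$ in that class and sets $v_i=\sqrt{\lambda u_i}$. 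Thus in every case the task reduces to choosing $n$ points $a_i$ whose $u_i$ are ``square-balanced.'' Since $\prod_i u_i^{-1}=\prod_{i\neq j}(a_i-a_j)=(-1)^{\binom n2}\prod_{i<j}(a_i-a_j)^2$, the $u_i$ can all be squares only when $\chi((-1)^{\binom n2})=1$; the hypotheses $q\equiv1\bmod4$ in (i) and $q=r^2$ in (ii)--(iv) both give $\chi(-1)=1$, which removes this obstruction.

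For (ii) and (iii) I would exploit the subfield $\F_r\subset\F_{r^2}$: since $r$ is odd, $\F_r^*$ is the subgroup of $\F_{r^2}^*$ of order $r-1$, which lies in the index-two subgroup of squares (because $\tfrac{r^2-1}{2}=(r-1)\cdot\tfrac{r+1}{2}$ is a multiple of $r-1$); hence \emph{every} nonzero element of $\F_r$ is a square in $\F_{r^2}$. For (ii), with $n\le r$ I take the $a_i$ to be distinct elements of $\F_r$, so each $u_i\in\F_r^*$ is automatically a square and $\lambda=1$, $v_i=\sqrt{u_i}$ work. For (iii), when $n-1\mid q-1$ I take $\ba=(0,1,\beta,\dots,\beta^{n-2})$ with $\beta$ generating the group of $(n-1)$-th roots of unity, so $g(x):=\prod_i(x-a_i)=x(x^{n-1}-1)$ and $u_i^{-1}=g'(a_i)$. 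A one-line computation gives $g'(0)=-1$ and $g'(\beta^t)=n-1$, so the $u_i$ take only the two values $-1$ and $(n-1)^{-1}$; as $-1$ is a square and $n-1$ lies in the prime field $\subseteq\F_r$ (hence is a square in $\F_{r^2}$), both values are squares and $\lambda=1$ again closes the argument.

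For (iv) I would build the evaluation set from additive cosets of $\F_r$. Fix $\theta\in\F_{r^2}$ with $\theta^2\in\F_r$ a nonsquare of $\F_r$, so $\theta^r=-\theta$; choose distinct $e_1,\dots,e_{2t}\in\F_r$ (possible as $2t\le r-1$) and take $\ba=\bigcup_{s=1}^{2t}(e_s\theta+\F_r)$, which has the required $n=2tr$ distinct points. The key simplification is that for $x,x'\in\F_r$ and $\delta=(e_s-e_{s'})\theta\in T:=\{y:y^r=-y\}$ the Frobenius gives $(\delta+x)^r-(\delta+x)=\delta^r-\delta=-2\delta$, while $\prod_{x'\in\F_r,\,x'\neq x}(x-x')=-1$; hence $u_i^{-1}$ depends only on the coset of $a_i$ and equals $-\prod_{s'\neq s}\bigl(-2(e_s-e_{s'})\theta\bigr)$. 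Since every $e_s-e_{s'}$ and every power of $\theta^2$ lies in $\F_r^*$ and is therefore a square in $\F_{r^2}$, the square class of each $u_i$ is governed solely by the common factor $\chi(\theta)$; here $r\equiv3\bmod4$ forces $\chi(\theta)=1$ (that is, $\theta$ is a square in $\F_{r^2}$), so all $u_i$ are squares and the construction closes with $\lambda=1$.

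The genuinely hard part is (i), where no arithmetic structure is available and I would argue by counting. Using $\tfrac12(1+\chi(\cdot))$ as the indicator of squares, I would estimate
\[
N=\sum_{\substack{a_1,\dots,a_n\in\Fq\\ \text{distinct}}}\ \prod_{i=1}^n\frac12\Bigl(1+\chi\bigl(\textstyle\prod_{j\neq i}(a_i-a_j)\bigr)\Bigr),
\]
the number of ordered tuples for which all $u_i$ are squares. The empty-subset term contributes the main term $2^{-n}q(q-1)\cdots(q-n+1)$, the full-subset term equals $\chi((-1)^{\binom n2})$ times this main term by the discriminant identity above, and each of the remaining terms is a multiplicative character sum over a product of distinct linear factors, bounded via the Weil estimate by roughly $n\sqrt q\,q^{\,n-1}$. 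The delicate point is organizing the error so that the crude hypothesis $q\ge 4^n n^2$ forces the main term to dominate and $N>0$; the hypothesis $q\equiv1\bmod4$ is exactly what makes the full-subset term reinforce rather than cancel the main term. Once $N>0$ one extracts a valid point set and finishes through the common framework.
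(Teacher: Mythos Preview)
Your treatment of parts (ii) and (iv) matches the paper's arguments essentially line for line: the same evaluation sets (points in $\F_r$, respectively unions of additive $\F_r$-cosets of a trace-zero element), the same identification $u_i^{-1}=g'(a_i)$, and the same observation that $\F_r^*\subseteq(\F_{r^2}^*)^2$. For (iii) you take a more direct route than the paper: rather than invoking the row-equivalence criterion of Lemma~\ref{lem:2.5} and Example~\ref{exm:2.6} (showing $A_{\ba}^{(r)}$ is a row permutation of $A_{\ba}$, hence the system has an $\F_r$-rational solution, hence Lemma~\ref{lem:3.3} applies), you simply compute $g'(x)=nx^{n-1}-1$ for $g(x)=x^n-x$ and observe that the two values $g'(0)=-1$ and $g'(\beta^t)=n-1$ both lie in $\F_r^*$ and are therefore squares in $\F_{r^2}$. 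This is shorter and avoids the matrix machinery; the paper's criterion, on the other hand, is a reusable tool that does not require knowing $u_i$ explicitly.

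For (i) the approaches diverge more substantially. The paper does \emph{not} count $n$-tuples with all $u_i$ square. Instead it proves the stronger statement (Lemma~\ref{lem:3.1}) that when $q\ge 4^n n^2$ there exist $\Ga_1,\dots,\Ga_n$ with \emph{every pairwise difference} $\Ga_j-\Ga_i$ ($i<j$) a nonzero square, arguing by induction on $n$: given $n-1$ such points, the number of admissible extensions $\Ga_n$ is controlled by a \emph{univariate} character sum, for which the ready-made bound \cite[Exercise~5.64]{LN} gives the explicit inequality~\eqref{eq:3.1} and hence positivity. The hypothesis $q\equiv1\bmod4$ then makes $-1$ a square, so each $u_i^{-1}=\prod_{j\neq i}(\Ga_i-\Ga_j)$ is automatically a square. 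Your direct multivariate expansion is a legitimate alternative, and your identification of the full-subset term via the discriminant is correct; but the ``delicate point'' you flag is genuine. Bounding the $2^n-2$ intermediate terms requires, for each proper nonempty $S$, fixing some $i_0\in S$, observing that the factors $(a_{i_0}-a_j)$ with $j\in S\setminus\{i_0\}$ pair into squares while those with $j\notin S$ remain as $n-|S|$ distinct linear factors, applying Weil in that one variable, and then summing over the remaining $n-1$ variables and over $S$; one must also check that the resulting constants are in fact absorbed by $4^n n^2$. The paper's inductive reduction to a one-variable sum sidesteps all of this bookkeeping.
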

\begin{rmk}{\rm
Part (i) of Theorem \ref{thm:1.1} says that for any given even length $n$ we have a $q$-ary MDS code as long as $q\equiv1\bmod{4}$ and $q$ is sufficiently large (say $q\ge 4^n\times n^2)$, while Part (iii)  of Theorem \ref{thm:1.1} extends the result of \cite{G12} where a stricter  condition $(n-1)|(r-1)$ is required. In addition, we also use our approach to get MDS self-dual codes in \cite{GG08}. Note that the approach in  \cite{GG08} is quite different as the main tool for constructing MDS codes in \cite{GG08} is orthogonal design, while our construction is through generalized Reed-Solomon codes. }\end{rmk}

\subsection*{Our techniques} Our idea of constructing MDS self-dual codes is through  generalized Reed-Solomon (GRS or generalized RS for short) codes. In this paper, we  present two methods to construct generalized RS codes that are  self-dual. The first one is to directly find  elements $\Ga_1,\Ga_2,\dots,\Ga_n$ such that $\prod_{1\le j\le n, j\neq i}(\Ga_i-\Ga_j)$ is a square element in $\F_q$. The second method is to find a sufficient condition under which the homogenous equation system $A\bx^T=\bo$  with $A$ over $\F_q$ has a nonzero solution over $\F_r$, where $q=r^2$ (see \eqref{eq:2.3}).

\subsection*{Organization of the paper}
In Section 2, we first study generalized Reed-Solomon codes and their duals, and analyze solutions of   a system of homogenous equations.  In Section 3, we show that these conditions are satisfied in some cases and consequently we obtain several classes of MDS self-dual codes.

\section{Preliminaries}

Let $\F_q$ be the finite field of $q$ elements and let $\{\Ga_1,\Ga_2,\dots,\Ga_n\}$  be $n$ distinct elements of $\F_q$. Choose $n$ nonzero elements $v_1,v_2,\dots,v_n$ of $\F_q$ ($v_i$ may not be distinct). 
 Put $\bv=(v_1,v_2,\dots,v_n)$ and $\ba=(\Ga_1,\Ga_2,\dots,\Ga_n)$.
 Then the generalized Reed-Solomon code associated with $\ba$ and $\bv$ is defined below.
\begin{equation}\label{eq:2.1}\GRS_k(\ba,\bv):=\{(v_1f(\Ga_1),v_2f(\Ga_2),\dots,v_nf(\Ga_n)):\;f(x)\in\F_q[x],\ \deg(f(x))\le k-1\}.\end{equation}
It is well known that  the code $\GRS_k(\ba,\bv)$ is a $q$-ary $[n,k,n-k+1]$-MDS code \cite[Theorem 9.1.4]{LX} and the corresponding dual code is also a GRS code.

Furthermore we consider the extended code of the generalized Reed-Solomon code $\GRS_k(\ba,\bv)$ given by
\begin{equation}\label{eq:2.2}\GRS_k(\ba,\bv,\infty):=\{(v_1f(\Ga_1),v_2f(\Ga_2),\dots,v_nf(\Ga_n),f_{k-1}):\;f(x)\in\F_q[x],\ \deg(f(x))\le k-1\},\end{equation}
where $f_{k-1}$ stands for the coefficient of $x^{k-1}$ in $f(x)$. The following result can be easily derived from \cite{LX}.
\begin{lem}\label{lem:2.1} The code $\GRS_k(\ba,\bv,\infty)$ defined in {\rm \eqref{eq:2.2}} is a $q$-ary $[n+1,k,n+2-k]$-MDS code.
\end{lem}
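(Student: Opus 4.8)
The plan is to verify the two defining parameters of an MDS code directly: that $\GRS_k(\ba,\bv,\infty)$ has dimension exactly $k$ and that its minimum distance equals $n+2-k$. The MDS property then follows immediately, since these meet the Singleton bound $d\le (n+1)-k+1=n+2-k$ for a code of length $n+1$.

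First I would establish the dimension. Consider the $\Fq$-linear evaluation map $\phi$ sending a polynomial $f(x)$ with $\deg f\le k-1$ to $(v_1f(\Ga_1),\dots,v_nf(\Ga_n),f_{k-1})$. Its domain, the space of polynomials of degree at most $k-1$, has dimension $k$, so it suffices to show $\phi$ is injective. In fact the first $n$ coordinates already determine $f$: a nonzero $f$ of degree $\le k-1$ has at most $k-1<n$ roots, so it cannot vanish at all $n$ distinct points $\Ga_1,\dots,\Ga_n$, and since every $v_i\ne 0$ the image is nonzero. Hence $\phi$ is injective, and the code has dimension $k$ and length $n+1$ as claimed.

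The substantive step is the lower bound on the minimum distance, namely that every nonzero codeword has weight at least $n+2-k$. Given a nonzero $f$ with $\deg f\le k-1$, I would split into two cases according to the appended coordinate $f_{k-1}$. If $f_{k-1}\ne 0$, then $\deg f=k-1$, so $f$ has at most $k-1$ roots; thus at most $k-1$ of the entries $v_if(\Ga_i)$ vanish, leaving at least $n-k+1$ nonzero among the first $n$ coordinates, and together with the nonzero last coordinate the weight is at least $n-k+2$. If instead $f_{k-1}=0$, then $\deg f\le k-2$, so $f$ has at most $k-2$ roots and at least $n-k+2$ of the first $n$ coordinates are nonzero. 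In either case the weight is at least $n+2-k$, matching the Singleton upper bound, which proves the code is MDS.

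The only mild obstacle is organizing this case analysis cleanly; conceptually it is just root counting for polynomials over $\Fq$, and no genuinely hard step arises. An alternative and even shorter route would be to recognize the appended symbol $f_{k-1}$ as the value of $f$ at the point at infinity on the projective line, so that $\GRS_k(\ba,\bv,\infty)$ is itself a generalized Reed-Solomon code on the $n+1$ points $\Ga_1,\dots,\Ga_n,\infty$; the MDS property is then inherited directly from the known result for GRS codes cited from \cite{LX}.
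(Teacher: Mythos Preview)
Your proposal is correct and follows essentially the same approach as the paper's own argument: both prove the dimension is $k$ via injectivity of the evaluation map and then bound the minimum weight by the same two-case root-counting analysis, splitting on whether $\deg f=k-1$ (equivalently $f_{k-1}\ne 0$) or $\deg f\le k-2$. Your additional remark about viewing the extra coordinate as evaluation at $\infty$ on the projective line is a nice observation but is not used in the paper's proof.
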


For any distinct elements $\Ga_1,\dots,\Ga_n$ of $\F_{q}$, put $\ba= (\Ga_1,\dots,\Ga_n)$  and denote by $A_{\ba}$ the matrix
\begin{equation}\label{eq:2.4}\left(\begin{array}{cccc}
1 &1&\dots&1\\
\a_1&\a_2&\dots&\a_n\\
\a_1^2&\a_2^2&\cdots&\a_n^2\\
\vdots&\vdots&\ddots&\vdots\\
\a_1^{n-2}&\a_2^{n-2}&\cdots&\a_n^{n-2}\end{array} \right).\end{equation}

\begin{lem}\label{lem:2.1a} The solution space of the equation system $A_{\ba}\bx^T=\bo$ has dimension $1$ and  $\{\bu=(u_1,\dots,u_n)\}$ is a basis of this solution space, where  $u_i=\prod_{1\le j\le n,  j\neq i}(\Ga_i-\Ga_j)^{-1}$. Furthermore, for any two polynomials $f(x), g(x)\in\F_q[x]$ with $\deg(f)\le k-1$ and $\deg(g)\le n-k-1$, one has
$\sum_{i=1}^nf(\Ga_i)(u_ig(\Ga_i))=0$.
\end{lem}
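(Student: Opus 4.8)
The plan is to treat the two assertions separately and to reduce everything to a single family of power-sum identities. For the dimension claim, I would first observe that $A_{\ba}$ is the $(n-1)\times n$ matrix whose rows are the power vectors $(\Ga_1^m,\dots,\Ga_n^m)$ for $m=0,1,\dots,n-2$. Deleting any one of its columns leaves a square $(n-1)\times(n-1)$ Vandermonde matrix in the remaining $\Ga_i$, which is nonsingular because the $\Ga_i$ are distinct. Hence $A_{\ba}$ has rank $n-1$, and its null space $\{\bx:A_{\ba}\bx^T=\bo\}$ has dimension $n-(n-1)=1$. It then remains only to exhibit one nonzero vector in this line.

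Next I would verify that the proposed $\bu=(u_1,\dots,u_n)$ with $u_i=\prod_{j\neq i}(\Ga_i-\Ga_j)^{-1}$ lies in the null space; since each $u_i\neq 0$, this determines the solution space up to a scalar. The condition $A_{\ba}\bu^T=\bo$ is precisely the system of identities
\[
\sum_{i=1}^n \Ga_i^m\,u_i=0,\qquad m=0,1,\dots,n-2.
\]
The cleanest route to these is Lagrange interpolation. For $0\le m\le n-2$ the polynomial $x^m$ has degree at most $n-1$, so it coincides with its Lagrange interpolant on the $n$ distinct nodes $\Ga_1,\dots,\Ga_n$:
\[
x^m=\sum_{i=1}^n \Ga_i^m\prod_{j\neq i}\frac{x-\Ga_j}{\Ga_i-\Ga_j}.
\]
Comparing the coefficient of $x^{n-1}$ on both sides settles all the identities at once: on the left it is $0$ since $m\le n-2$, while on the right each $\prod_{j\neq i}(x-\Ga_j)$ is monic of degree $n-1$, so that coefficient equals $\sum_{i=1}^n \Ga_i^m u_i$.

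Finally, for the orthogonality relation I would reduce to the identities just proved by linearity. Given $f$ with $\deg f\le k-1$ and $g$ with $\deg g\le n-k-1$, the product $h(x):=f(x)g(x)$ has degree at most $(k-1)+(n-k-1)=n-2$, so we may write $h(x)=\sum_{m=0}^{n-2}h_m x^m$. Then
\[
\sum_{i=1}^n f(\Ga_i)\big(u_i g(\Ga_i)\big)=\sum_{i=1}^n u_i h(\Ga_i)=\sum_{m=0}^{n-2}h_m\Big(\sum_{i=1}^n \Ga_i^m u_i\Big)=0,
\]
each inner sum vanishing by the previous step.

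I expect the only genuine obstacle to be the power-sum identities $\sum_i\Ga_i^m u_i=0$ in the range $0\le m\le n-2$; everything else is routine bookkeeping once they are in hand. The interpolation-and-coefficient-comparison argument disposes of them uniformly, but one could alternatively recognize $\sum_i\Ga_i^m\prod_{j\neq i}(\Ga_i-\Ga_j)^{-1}$ as the evaluation of a complete homogeneous symmetric function, which vanishes in this degree range, or derive them by a partial-fraction manipulation. I would retain the interpolation version for its brevity and because it makes the role of the bound $m\le n-2$ transparent.
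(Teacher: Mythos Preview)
Your argument is correct and follows the same overall route as the paper: establish $\mathrm{rank}\,A_{\ba}=n-1$, verify that $\bu$ is a nonzero solution via the power-sum identities $\sum_i\Ga_i^m u_i=0$ for $0\le m\le n-2$, and then deduce the orthogonality relation by linearity since $\deg(fg)\le n-2$. The only difference is that the paper leaves the verification of $A_{\ba}\bu^T=\bo$ as ``straightforward,'' whereas you supply an explicit Lagrange-interpolation argument; this is a welcome elaboration rather than a different method.
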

\begin{proof} It is easy to see that the rank of $A_{\ba}$ is $n-1$. Thus, the solution space has dimension $1$. Furthermore, it is straightforward to verify that $\bu$ is a nonzero solution.

Since $\bu$ is a solution of $A_a\bx=\bo$, it is easy to see  that the Euclidean inner product of $(\Ga_1^i,\dots,\Ga_n^i)$ and $(u_1\Ga_1^j,\dots,u_n\Ga_n^j)$ is zero for all $0\le i\le k-1$ and $0\le j\le n-k-1$. This implies that  $\sum_{i=1}^nf(\Ga_i)(u_ig(\Ga_i))=0$  for any two polynomials $f(x), g(x)\in\F_q[x]$ with $\deg(f)\le k-1$ and $\deg(g)\le n-k-1$.
\end{proof}

\begin{lem}\label{lem:2.2} Let $\bi$ be all-one word of length $n$. Then  one has the following results.
\begin{itemize}
\item[(i)] The dual code of $\GRS_k(\ba,\bi)$ is $\GRS_{n-k}(\ba,\bu)$, where $\bu=(u_1,u_2,\dots,u_n)$ with $u_i=\prod_{1\le j\le n,  j\neq i}(\Ga_i-\Ga_j)^{-1}$.
\item[(ii)] If $1\le k\le q-1$, then the dual code of $\GRS_k(\ba,\bi,\infty)$ is $\GRS_{q-k+1}(\ba,\bi,\infty)$.
\end{itemize}
\end{lem}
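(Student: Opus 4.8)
The plan is to prove both parts by establishing that the two codes in question are Euclidean-orthogonal and then matching dimensions, since in each case the claimed pair has dimensions summing to the code length; orthogonality plus a dimension count then forces the dual relation.

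For part (i), both $\GRS_k(\ba,\bi)$ and $\GRS_{n-k}(\ba,\bu)$ are MDS codes of length $n$, of dimensions $k$ and $n-k$ respectively, so their dimensions add up to $n$. A generic codeword of the first is $(f(\Ga_1),\dots,f(\Ga_n))$ with $\deg f\le k-1$, and a generic codeword of the second is $(u_1g(\Ga_1),\dots,u_ng(\Ga_n))$ with $\deg g\le n-k-1$. Lemma \ref{lem:2.1a} states precisely that $\sum_{i=1}^nf(\Ga_i)(u_ig(\Ga_i))=0$ for all such $f,g$, which is exactly the assertion that these two codewords are orthogonal. Hence $\GRS_{n-k}(\ba,\bu)\subseteq\GRS_k(\ba,\bi)^{\perp}$, and comparing dimensions (both equal $n-k$) forces equality. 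So part (i) is immediate once Lemma \ref{lem:2.1a} is in hand.

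For part (ii), I would first observe that the claim pins down the geometry: the dual of the length-$(n+1)$ code $\GRS_k(\ba,\bi,\infty)$ has dimension $n+1-k$, and equating this with the claimed dimension $q-k+1$ gives $n=q$. Thus $\ba=(\Ga_1,\dots,\Ga_q)$ must enumerate all of $\F_q$, and both extended codes have length $q+1$ with dimensions $k$ and $q-k+1$ summing to $q+1$. It then suffices to verify that a codeword $(f(\Ga_1),\dots,f(\Ga_q),f_{k-1})$ with $\deg f\le k-1$ is orthogonal to a codeword $(g(\Ga_1),\dots,g(\Ga_q),g_{q-k})$ with $\deg g\le q-k$, that is, that $\sum_{i=1}^qf(\Ga_i)g(\Ga_i)+f_{k-1}g_{q-k}=0$.

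The computational heart of part (ii), and the main obstacle, is evaluating this sum. I would use the standard power-sum identity over $\F_q$, namely $\sum_{\Ga\in\F_q}\Ga^{j}=0$ for $0\le j\le q-2$ and $\sum_{\Ga\in\F_q}\Ga^{q-1}=-1$. Writing $h(x)=f(x)g(x)$, which has degree at most $(k-1)+(q-k)=q-1$, the identity gives $\sum_{\Ga\in\F_q}h(\Ga)=-h_{q-1}$, where $h_{q-1}$ denotes the coefficient of $x^{q-1}$. Because $\deg f\le k-1$ and $\deg g\le q-k$, the only way to produce an $x^{q-1}$ term in the product is by multiplying the two leading admissible terms, so $h_{q-1}=f_{k-1}g_{q-k}$. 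Hence $\sum_{i=1}^qf(\Ga_i)g(\Ga_i)=-f_{k-1}g_{q-k}$, which cancels the extra coordinate term $f_{k-1}g_{q-k}$ and yields inner product $0$. Orthogonality together with the dimension count then gives $\GRS_{q-k+1}(\ba,\bi,\infty)=\GRS_k(\ba,\bi,\infty)^{\perp}$. The restriction $1\le k\le q-1$ keeps both degree ranges and extra coordinates well-defined, and I expect the only delicate points to be the bookkeeping of the coefficient extraction and the correct normalization in the power-sum identity.
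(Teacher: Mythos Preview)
Your proposal is correct and follows essentially the same approach as the paper: for part (i) you invoke Lemma \ref{lem:2.1a} to get orthogonality and then compare dimensions, exactly as the paper does; for part (ii) both you and the paper reduce to the power-sum identity $\sum_{\Ga\in\F_q}\Ga^j=0$ for $0\le j\le q-2$ and $\sum_{\Ga\in\F_q}\Ga^{q-1}=-1$, followed by a dimension count. The only cosmetic difference is that the paper checks orthogonality on monomial basis vectors $\bc_\ell,\bbc_m$ with a short case analysis on $\ell+m$, whereas you work with generic $f,g$, set $h=fg$, and extract the coefficient $h_{q-1}=f_{k-1}g_{q-k}$ in one stroke; your packaging is slightly cleaner but the content is identical.
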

\begin{proof} By the second statement of Lemma \ref{lem:2.1a}, we know that $\GRS_{n-k}(\ba,\bu)$ is orthogonal to $\GRS_k(\ba,\bi)$. Thus, part (i)  follows from the fact that $\dim(\GRS_k(\ba,\bi))+\dim(\GRS_{n-k}(\ba,\bu))=k+n-k=n$.

For part (ii), we denote
\[\bc_i =\left\{\begin{array}{ll}
(\Ga_1^i,\Ga_2^i,\dots,\Ga_q^i, 0) &\mbox{if $i\not=k-1$},\\
(\Ga_1^i,\Ga_2^i,\dots,\Ga_q^i, 1) &\mbox{if $i=k-1$};
\end{array}\right.\quad
\bbc_i =\left\{\begin{array}{ll}
(\Ga_1^i,\Ga_2^i,\dots,\Ga_q^i, 0) &\mbox{if $i\not=q-k$},\\
(\Ga_1^i,\Ga_2^i,\dots,\Ga_q^i, 1) &\mbox{if $i=q-k$}.
\end{array}\right.\]
Consider the dot product of $\bc_{\ell}$ and $\bbc_m$ with $0\le \ell\le k-1$ and $0\le m\le q-k$. If $\ell=m=0$,
 then both $\bc_{\ell}$ and $\bc_m$ are $(\bi, 0)$, where $\bi$ is the all-one word of length $q$. Thus,
  the dot product $\langle\bc_{\ell},\bbc_m\rangle$ is $0$. If $\ell=k-1$ and $m=q-k$, then
  $\bc_{\ell}=(\Ga_1^{\ell},\Ga_2^{\ell},\dots,\Ga_q^{\ell},1)$ and $\bbc_m=(\Ga_1^{m},\Ga_2^{m},\dots,\Ga_q^{m},1)$. Thus,
   $\langle\bc_{\ell},\bbc_m\rangle=1+\sum_{i=1}^q\Ga_i^{q-1}=0$. Now assume that $0<\ell+m<q-1$.
   Without loss of generality, let $\ell>0$. Then $\bc_{\ell}=(\Ga_1^{\ell},\Ga_2^{\ell},\dots,\Ga_q^{\ell},0)$.
   Thus,  $\langle\bc_{\ell},\bbc_m\rangle=\sum_{i=1}^q\Ga_i^{\ell+m}=0$ since $1\le \ell+m\le q-2$. This completes the proof of Part (ii).
\end{proof}
The  following corollary follows immediately from Lemma \ref{lem:2.2}.
\begin{cor}\label{cor:2.3} Let $n$ be an even number.
\begin{itemize}
\item[(i)] Let $\lambda\in\F_q^*$. If $w_i=\lambda\prod_{1\le j\le n,  j\neq i}(\Ga_i-\Ga_j)^{-1}$ is equal to $v_i^2$ for some $v_i\in\F_q$ for all $i=1,2,\dots,n$, then the code
 $\GRS_{n/2}(\ba,\bv)$ is MDS self-dual.
\item[(ii)] If $q$ is odd, then the  code  $\GRS_{(q+1)/2}(\ba,\bi,\infty)$ is self-dual.
\end{itemize}
\end{cor}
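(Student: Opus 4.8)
The plan is to obtain both parts directly from Lemma~\ref{lem:2.2}, using the elementary fact that a GRS code behaves predictably under coordinatewise scaling. For part~(i), the first step is to compute the dual of a GRS code with a general weight vector $\bv$. Writing $D_\bv=\mathrm{diag}(v_1,\dots,v_n)$, one has $\GRS_k(\ba,\bv)=\GRS_k(\ba,\bi)\cdot D_\bv$, and since $\langle \bx D_\bv,\bw D_\bv^{-1}\rangle=\langle \bx,\bw\rangle$ for all $\bx,\bw$, scaling a code by $D_\bv$ scales its dual by $D_\bv^{-1}$. Combining this with Lemma~\ref{lem:2.2}(i) yields
\[
\GRS_k(\ba,\bv)^{\perp}=\GRS_{n-k}(\ba,\bu)\cdot D_\bv^{-1}=\GRS_{n-k}(\ba,\bu\bv^{-1}),
\]
where $\bu\bv^{-1}$ denotes the vector with $i$-th entry $u_i v_i^{-1}$ and $u_i=\prod_{1\le j\le n,\,j\ne i}(\Ga_i-\Ga_j)^{-1}$.

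The second step specializes to $k=n/2$, which is legitimate because $n$ is even; then $\GRS_{n/2}(\ba,\bv)^{\perp}=\GRS_{n/2}(\ba,\bu\bv^{-1})$, a code of the same dimension $n/2$. It remains to see that the hypothesis forces these two codes to coincide. From $v_i^2=w_i=\lambda u_i$ we get $u_i v_i^{-1}=\lambda^{-1}v_i$, so $\bu\bv^{-1}=\lambda^{-1}\bv$ is a nonzero scalar multiple of $\bv$. Since multiplying every weight $v_i$ by a common nonzero constant does not change the GRS code, $\GRS_{n/2}(\ba,\bu\bv^{-1})=\GRS_{n/2}(\ba,\bv)$, whence $\GRS_{n/2}(\ba,\bv)$ is self-dual. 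It is MDS because every GRS code is MDS (the statement following \eqref{eq:2.1}), so part~(i) follows.

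For part~(ii), I would simply invoke Lemma~\ref{lem:2.2}(ii) with $k=(q+1)/2$. Since $q$ is odd we have $q\ge 3$, so $1\le (q+1)/2\le q-1$ and the lemma applies; moreover $q-k+1=q-(q+1)/2+1=(q+1)/2=k$, so the dual of $\GRS_{(q+1)/2}(\ba,\bi,\infty)$ is the code itself. Note the length $q+1$ is even, as self-duality requires.

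Since each part is a one-line specialization of Lemma~\ref{lem:2.2}, there is no serious obstacle; the only point demanding a little care is the scaling computation in part~(i)---verifying that conjugating the duality of Lemma~\ref{lem:2.2}(i) by $D_\bv$ is valid, and that the square condition $v_i^2=\lambda u_i$ is exactly what makes $\bu\bv^{-1}$ proportional to $\bv$, the overall constant $\lambda^{-1}$ being irrelevant at the level of codes.
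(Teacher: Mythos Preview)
Your proof is correct. Part~(ii) is identical to the paper's argument. For part~(i) there is a small but genuine difference in packaging: the paper works directly at the level of inner products, invoking Lemma~\ref{lem:2.1a} to obtain $\sum_i f(\Ga_i)(u_ig(\Ga_i))=0$ and then multiplying through by $\lambda$ to rewrite this as $\sum_i (v_if(\Ga_i))(v_ig(\Ga_i))=0$; you instead pass through Lemma~\ref{lem:2.2}(i), identify the dual of a general $\GRS_k(\ba,\bv)$ via the diagonal-scaling argument $(C\cdot D_\bv)^{\perp}=C^{\perp}\cdot D_\bv^{-1}$, and then observe that the hypothesis $v_i^2=\lambda u_i$ makes $\bu\bv^{-1}$ a scalar multiple of $\bv$. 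Both arguments unwind to the same computation, but yours has the mild advantage of yielding the explicit formula $\GRS_k(\ba,\bv)^{\perp}=\GRS_{n-k}(\ba,\bu\bv^{-1})$ for arbitrary $\bv$ along the way, while the paper's direct inner-product check is shorter and avoids introducing the scaling machinery.
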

\begin{proof} To prove  Part (i), let $f(x), g(x)\in\F_q[x]$ with $\deg(f)\le \frac{n}2-1$ and $\deg(g)\le \frac{n}2-1$. By the second statement of Lemma \ref{lem:2.1a}, we have
$\sum_{i=1}^nf(\Ga_i)(u_ig(\Ga_i))=0$, where $u_i=\prod_{1\le j\le n,  j\neq i}(\Ga_i-\Ga_j)^{-1}$ for $i=1,2,\dots,n$. Hence,
\[
0=\lambda\sum_{i=1}^nf(\Ga_i)(u_ig(\Ga_i))=\sum_{i=1}^nf(\Ga_i)(\lambda u_ig(\Ga_i))=\sum_{i=1}^n(v_if(\Ga_i))(v_ig(\Ga_i)).
\]
This implies that $\GRS_{n/2}^{\perp}(\ba,\bv)=\GRS_{n/2}(\ba,\bv)$.

Part (ii) is a direct result of Lemma \ref{lem:2.2}(ii).
\end{proof}

For the rest of this section, we provide another sufficient condition under which a GRS code is self-dual. For this purpose,
we assume that $q=r^2$.

Let us consider a system of equations over $\F_{r^2}$ given by
\begin{equation}\label{eq:2.3}
A\bx^T=\bo,
\end{equation}
where $A$ is an $(n-1)\times n$ matrix of rank $n-1$ over $\F_{r^2}$. One knows that (\ref{eq:2.3}) must have at least one nonzero solution over $\F_{r^2}$. However, for our application, we are curious about the question  whether (\ref{eq:2.3}) has a nonzero solution over $\F_r$. In this section, we give some sufficient and necessary conditions under which (\ref{eq:2.3}) has a nonzero solution over $\F_r$.

\begin{lem}\label{lem:2.4} The equation (\ref{eq:2.3}) has a nonzero solution in $\F^n_r$ if and only if $\bc^r$ is a solution of (\ref{eq:2.3}) whenever $\bc$ is a solution of (\ref{eq:2.3}), where $\bc^r$ is obtained from $\bc$ by raising every coordinate to its $r$th power.
\end{lem}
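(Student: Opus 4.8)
The plan is to build everything on the fact that, since $A$ has rank $n-1$, the solution space $S$ of \eqref{eq:2.3} is a one-dimensional $\F_{r^2}$-subspace of $\F_{r^2}^n$; I would fix a nonzero $\bc\in S$ so that $S=\{\lambda\bc:\lambda\in\F_{r^2}\}$. The second ingredient I would use throughout is that a vector lies in $\F_r^n$ exactly when it is fixed by the coordinatewise Frobenius map $\bx\mapsto\bx^r$, and that $\bx^{r^2}=\bx$ for every $\bx\in\F_{r^2}^n$.

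For the forward implication the argument is short. Assuming \eqref{eq:2.3} has a nonzero solution in $\F_r^n$, I would take that vector as my basis $\bc$, so $\bc^r=\bc$. Then any solution has the form $\lambda\bc$ with $\lambda\in\F_{r^2}$, and $(\lambda\bc)^r=\lambda^r\bc^r=\lambda^r\bc\in S$; hence the Frobenius image of every solution is again a solution, which is the asserted closure property.

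The substance is the converse, and this is where I expect the real work. Assuming $\bc^r\in S$ for the chosen nonzero $\bc$, one-dimensionality gives a scalar $\lambda\in\F_{r^2}^*$ with $\bc^r=\lambda\bc$. Applying Frobenius once more and using $\bc^{r^2}=\bc$, I get $\bc=(\bc^r)^r=\lambda^r\bc^r=\lambda^{r+1}\bc$, whence $\lambda^{r+1}=1$, i.e.\ $\lambda$ is a norm-one element of $\F_{r^2}$ over $\F_r$. The idea is then to ``untwist'' this scalar: I would seek $\mu\in\F_{r^2}^*$ with $\mu^{r-1}=\lambda^{-1}$ and set $\bw=\mu\bc$. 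Granting such a $\mu$, the computation $\bw^r=\mu^r\bc^r=(\mu^r\lambda)\bc=(\mu\cdot\mu^{r-1}\lambda)\bc=\mu\bc=\bw$ shows $\bw\in\F_r^n$; moreover $\bw\neq\bo$ and $A\bw^T=\mu A\bc^T=\bo$, producing the desired nonzero solution over $\F_r$.

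The one genuinely nontrivial point, and the step I would flag as the main obstacle, is the existence of $\mu$. I would justify it by an elementary count: the homomorphism $\mu\mapsto\mu^{r-1}$ on the cyclic group $\F_{r^2}^*$ has kernel $\F_r^*$ of order $r-1$, hence image of order $(r^2-1)/(r-1)=r+1$, and this image is forced to be the full subgroup of norm-one elements (those killed by the $(r+1)$-power map), which contains $\lambda^{-1}$; this is exactly Hilbert's Theorem~90 for $\F_{r^2}/\F_r$ made explicit. Everything else reduces to bookkeeping with the Frobenius map and the one-dimensionality of $S$, so I would isolate this norm/Hilbert~90 fact as the crux and arrange the two implications around it.
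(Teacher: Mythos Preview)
Your proof is correct. The forward implication matches the paper's argument essentially verbatim. For the converse, however, the paper takes a different route: rather than extracting the scalar $\lambda$ with $\bc^r=\lambda\bc$ and untwisting via Hilbert~90, it fixes a basis $\{1,\alpha\}$ of $\F_{r^2}$ over $\F_r$, forms the two trace-type vectors $\bw_1=\bc+\bc^r$ and $\bw_2=\alpha\bc+\alpha^r\bc^r$ (both automatically in $\F_r^n$ and in the solution space), and then uses the invertibility of the matrix $\left(\begin{smallmatrix}1&1\\ \alpha&\alpha^r\end{smallmatrix}\right)$ to deduce that $\bw_1,\bw_2$ cannot both vanish. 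Both are instances of Galois descent for a one-dimensional space; the paper's version is a bit more self-contained (no appeal to norm-one elements or Hilbert~90), while your version is more structural and exhibits the $\F_r$-solution as an explicit scalar multiple $\mu\bc$ of the original solution.
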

\begin{proof} If (\ref{eq:2.3}) has a nonzero solution $\bb$ in $\F^n_r$, then the solution space of (\ref{eq:2.3}) is $\F_{r^2}\cdot\bb=\{\Ga \bb:\; \Ga\in\F_{r^2}\}$ since the solution space has dimension $1$ over $\F_{r^2}$. Thus, for every solution $\lambda\bb$, we have $(\lambda\bb)^r=\lambda^r\bb\in \F_{r^2}\cdot\bb$.

Conversely, assume that $\bc^r$ is a  solution of (\ref{eq:2.3}) for a nonzero solution $\bc$  of (\ref{eq:2.3}). Choose a basis $\{1,\a\}$ of $\F_{r^2}$ over $\F_r$. Consider the two elements $\bw_1:=\bc+\bc^r$ and $\bw_2:=\a\bc+\a^r\bc^r$. It is clear that both $\bw_1$ and $\bw_2$ are solutions of (\ref{eq:2.3}) in $\F^n_r$. On the other hand, we have
\[\left(\begin{array}{c}
\bc \\
\bc^r\end{array} \right)=\left(\begin{array}{cc}
1 & 1 \\
\a & \a^r\end{array} \right)^{-1}\left(\begin{array}{c}
\bw_1 \\
\bw_2\end{array} \right).\]
This implies that one of $\bw_1$ and $\bw_2$ must be nonzero, otherwise $\bc$ is equal to zero. This completes the proof.
\end{proof}

The condition given in Lemma \ref{lem:2.4} can be converted to a condition on the coefficient matrix of the equation (\ref{eq:2.3}) as shown below.

\begin{lem}\label{lem:2.5} Let $A$ be the coefficient matrix of the equation (\ref{eq:2.3}). Then  the equation (\ref{eq:2.3}) has a nonzero solution in $\F^n_r$ if and only if $A^{(r)}$ and $A$ are row equivalent, where $A^{(r)}$ is obtained from $A$ by raising every entry to its $r$th power.
\end{lem}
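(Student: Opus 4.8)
The plan is to combine Lemma~\ref{lem:2.4} with two ingredients: first, the observation that applying the Frobenius map coordinatewise sends solutions of $A\bx^T=\bo$ to solutions of the ``conjugated'' system $A^{(r)}\bx^T=\bo$; and second, the standard linear-algebra fact that two matrices with the same number of columns are row equivalent exactly when they have the same solution space. Putting these together reduces the claimed matrix condition to the closure condition of Lemma~\ref{lem:2.4}.

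First I would record the Frobenius-equivariance. Since $x\mapsto x^r$ is a field homomorphism of $\F_{r^2}$ fixing $\bo$, raising the identity $A\bc^T=\bo$ to the $r$th power entry-by-entry gives $A^{(r)}(\bc^r)^T=\bo$. Hence the map $\bc\mapsto\bc^r$ carries the solution space $S$ of \eqref{eq:2.3} bijectively onto the solution space $S^{(r)}$ of $A^{(r)}\bx^T=\bo$. Because the determinant is a polynomial in the entries with integer coefficients, every $(n-1)\times(n-1)$ minor of $A^{(r)}$ is the $r$th power of the corresponding minor of $A$; as $x^r=0$ iff $x=0$, the matrix $A^{(r)}$ again has rank $n-1$ and $S^{(r)}$ is again one-dimensional over $\F_{r^2}$.

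Next I would invoke the fact that row equivalence of two matrices with $n$ columns is the same as equality of their solution spaces (equivalently, of their row spaces). This is where the only genuine content lies: elementary row operations manifestly preserve the solution space, and conversely the solution space is the orthogonal complement of the row space under the nondegenerate standard bilinear form on $\F_{r^2}^n$, so equal solution spaces force equal row spaces and hence a common reduced row echelon form. Thus the assertion ``$A^{(r)}$ and $A$ are row equivalent'' is literally the assertion ``$S^{(r)}=S$.''

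It then remains to match ``$S^{(r)}=S$'' with the criterion of Lemma~\ref{lem:2.4}. For one direction, if \eqref{eq:2.3} has a nonzero solution $\bb\in\F_r^n$, then $S=\F_{r^2}\cdot\bb$, and since $\bb^r=\bb$ while the Frobenius permutes $\F_{r^2}$, we get $S^{(r)}=\{\lambda^r\bb:\lambda\in\F_{r^2}\}=\F_{r^2}\cdot\bb=S$, so $A$ and $A^{(r)}$ are row equivalent. For the other direction, if $S^{(r)}=S$ then every solution $\bc$ of \eqref{eq:2.3} satisfies $\bc^r\in S^{(r)}=S$, i.e. $\bc^r$ is again a solution, and Lemma~\ref{lem:2.4} then yields a nonzero solution in $\F_r^n$. (Note that the single implication ``$\bc\in S\Rightarrow\bc^r\in S$'' already forces $S^{(r)}=S$, since the Frobenius is injective and $S$ is finite.) The main obstacle is nothing deep, merely stating the row-equivalence characterization cleanly; once that is in hand, the rest is routine bookkeeping with the Frobenius.
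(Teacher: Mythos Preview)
Your proposal is correct and follows essentially the same route as the paper: both use the Frobenius-equivariance observation that $\bc\mapsto\bc^r$ carries solutions of $A\bx^T=\bo$ to solutions of $A^{(r)}\bx^T=\bo$, then combine Lemma~\ref{lem:2.4} with the standard fact that equal solution spaces means row equivalence. Your write-up is more explicit (verifying that $A^{(r)}$ also has rank $n-1$, and spelling out the row-equivalence criterion via orthogonal complements), but the argument is the same.
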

\begin{proof} It is easy to see that $\bc^r$ is a solution of $A^{(r)}\bx^T=\bo$ whenever $\bc$ is a solution of (\ref{eq:2.3}) and vice versa. By Lemma \ref{lem:2.1}, this implies that  the equation (\ref{eq:2.3}) has a nonzero solution in $\F^n_r$ if and only if the equation $A^{(r)}\bx^T=\bo$ and the equation (\ref{eq:2.3}) have the same solution space, i.e., $A^{(r)}$ and $A$ are row equivalent.
\end{proof}

\begin{exm}\label{exm:2.6}{\rm Let $m$ be a divisor of $r^2-1$ and let $n=m+1$. Let $\Ga_2,\dots,\Ga_n$ be all the $m$th roots of unity.
 We claim that the system $A_{\ba}\bx=\bo$ has a nonzero solution in $\F_r^n$, where $\ba=(\Ga_1=0,\Ga_2,\dots,\Ga_n)$. To prove this, it is sufficient
 to show that the rows of $A_{\ba}^{(r)}$ are a permutation of the rows of $A_{\ba}$.
The first row of the two matrices are identical. Hence, it is sufficient to show that the last $n-2=m-1$ rows of $A_{\ba}^{(r)}$ are  a permutation of those of $A_{\ba}$. To see this, we notice that the powers in the  last $m-1$ rows of $A_{\ba}^{(r)}$ consist of
 $\{1\cdot r,2\cdot r,\dots,(m-1)\cdot r\}$, while  the powers in the last $m-1$ rows of $A_{\ba}$ consist of $\{1,2\dots,m-1\}$.
 Thus, the desired result follows from
 the fact that the set $\{1\cdot r\pmod{m},2\cdot r\pmod{m},\dots,(m-1)\cdot r\pmod{m}\}$ and  the set $\{1,2\dots,m-1\}$ are identical.
}
\end{exm}

\begin{exm}\label{exm:2.7}{\rm  Let $\Ga_1,\dots,\Ga_n$ be all the $n$ distinct elements of $\F_r$. Then $A_{\ba}$ is a matrix over $\F_r$ and  it is clear that system $A_{\ba}\bx=\bo$ has a nonzero solution in $\F_r^n$. On the other hand, if we apply  Lemma \ref{lem:2.5}, we can also see that $A_{\ba}\bx=\bo$ has a nonzero solution in $\F_r^n$ since
$A_{\ba}$ and $A_{\ba}^{(r)}$ are
equal and hence row equivalent.
}
\end{exm}

\begin{lem}\label{lem:3.3} Let $n$ be an even number and let $q=r^2$.  Let $\Ga_1,\Ga_2,\dots,\Ga_n$ be $n$ distinct elements of $\F_{q}$.
 If $A_{\ba}\bx=\bo$ has a nonzero solution $\bw=(w_1,\dots,w_n)$ in $\F_r$, then the code $\GRS_{n/2}(\ba,\bv)$ is an MDS self-dual code over $\F_{q}$,
 where $w_i=v_i^2$ for all $1\le i\le n$.
\end{lem}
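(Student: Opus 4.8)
The plan is to deduce the statement directly from Corollary \ref{cor:2.3}(i), the point being that an $\F_r$-valued solution automatically manufactures the square entries required there. First I would invoke Lemma \ref{lem:2.1a}: the solution space of $A_{\ba}\bx^T=\bo$ is one-dimensional over $\F_q$ and is spanned by $\bu=(u_1,\dots,u_n)$ with $u_i=\prod_{1\le j\le n,\,j\neq i}(\Ga_i-\Ga_j)^{-1}$. Since $\bw\in\F_r^n\subseteq\F_q^n$ is a nonzero solution, it must lie on this line, so $\bw=\lambda\bu$ for some $\lambda\in\F_q^*$; equivalently $w_i=\lambda\prod_{1\le j\le n,\,j\neq i}(\Ga_i-\Ga_j)^{-1}$ for every $i$. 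Each $w_i$ is nonzero, since $\lambda\neq0$ and every $u_i\neq0$.

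The arithmetic heart of the argument is the claim that every coordinate $w_i$, being a nonzero element of $\F_r$, is a square in $\F_q=\F_{r^2}$. Here I would use that $\F_{r^2}^*$ is cyclic of order $r^2-1$, whose subgroup $\F_r^*$ has order $r-1$. As $q=r^2$ is odd, $r$ is odd and $(r+1)/2$ is an integer, so $r-1$ divides $(r^2-1)/2=(r-1)(r+1)/2$. Hence $\F_r^*$ is contained in the unique index-two subgroup of squares of $\F_{r^2}^*$, and we may write $w_i=v_i^2$ with $v_i\in\F_q$ for each $i$.

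Finally I would feed this back into Corollary \ref{cor:2.3}(i) using exactly this scalar $\lambda$ and these $v_i$: since $w_i=\lambda\prod_{1\le j\le n,\,j\neq i}(\Ga_i-\Ga_j)^{-1}=v_i^2$ holds for all $i$, the corollary immediately yields that $\GRS_{n/2}(\ba,\bv)$ is an MDS self-dual code over $\F_q$, which is the desired conclusion; the MDS property is already built into that statement via the general fact about GRS codes in Lemma \ref{lem:2.1}.

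The only genuine obstacle is the square step, and it is mild: everything rests on the observation that the fixed field $\F_r$ sits inside the squares of $\F_{r^2}^*$. I would therefore be careful to record explicitly that $r$ is odd so that the divisibility $(r-1)\mid(r^2-1)/2$ is available, and to note that the scalar $\lambda$ produced by the dimension count in Lemma \ref{lem:2.1a} is precisely the $\lambda$ demanded in Corollary \ref{cor:2.3}(i)—this matching is what lets the two earlier results dovetail into a one-line finish.
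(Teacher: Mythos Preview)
Your proof is correct and follows essentially the same route as the paper: invoke Lemma \ref{lem:2.1a} to write $\bw=\lambda\bu$, observe that each $w_i\in\F_r^*$ is a square in $\F_{r^2}$, and then apply Corollary \ref{cor:2.3}(i). You supply more detail than the paper does on why $\F_r^*$ lies inside the squares of $\F_{r^2}^*$ (the paper simply asserts it), but the structure and key ingredients are identical.
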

\begin{proof} Since $w_i$ belongs to $\F_r$, there exists an element $v_i\in\F_{q}$ such that $w_i=v_i^2$. As the dimension of the solution space of $A_{\ba}\bx=\bo$ is $1$,
by Lemma \ref{lem:2.1a}, we must have $w_i=\lambda\prod_{1\le j\le n,  j\neq i}(\Ga_i-\Ga_j)^{-1} \neq0$ for some $\lambda\in\F_{r^2}^*$. The desired result follows from Corollary \ref{cor:2.3}(i).
\end{proof}

\section{MDS self-dual codes}

\subsection{MDS self-dual codes over $\F_q$ for sufficiently large $q$}
Let us start with a lemma.
\begin{lem}\label{lem:3.1} For any given $n$,  if $q\ge 4^n\times n^2$, then there exists a subset $S=\{\Ga_1,\Ga_2,\dots,\dots,\Ga_n\}$ of $\F_q$  such that $\Ga_j-\Ga_i$
are nonzero square elements for all $1\le i<j\le n$.
\end{lem}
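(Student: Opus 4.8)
The plan is to build the set one element at a time by a greedy argument, so that the whole problem reduces to a lower bound on the number of admissible choices at each step. I may assume $q$ is odd, since when $q$ is even every nonzero element of $\F_q$ is a square and any $n$ distinct elements will do. Let $\eta$ denote the quadratic character of $\F_q$, with the convention $\eta(0)=0$, so that a nonzero $a$ is a square exactly when $\eta(a)=1$, i.e. when $\tfrac12(1+\eta(a))=1$. Suppose I have already chosen distinct $\Ga_1,\dots,\Ga_{k-1}$ with every difference $\Ga_s-\Ga_t$ ($t<s$) a nonzero square. To continue, I want to count
\[ M_{k-1}:=\#\{x\in\F_q:\ \eta(x-\Ga_i)=1 \text{ for all } 1\le i\le k-1\}, \]
since any such $x$ is automatically distinct from the previous points and can serve as $\Ga_k$. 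Writing the indicator as $\tfrac12(1+\eta(x-\Ga_i))$ and expanding the product, I obtain
\[ 2^{k-1}M_{k-1}=\sum_{I\subseteq\{1,\dots,k-1\}}\ \sum_{x}\eta\Big(\prod_{i\in I}(x-\Ga_i)\Big)\ -\ (\text{correction for } x\in\{\Ga_i\}), \]
where the terms with $x\in\{\Ga_i\}$ must be subtracted because there the convention $\eta(0)=0$ does not kill the summand.

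The key step is to estimate each inner sum. The term $I=\emptyset$ contributes $q$. For nonempty $I$ the polynomial $\prod_{i\in I}(x-\Ga_i)$ is squarefree of degree $|I|\ge 1$ (the $\Ga_i$ are distinct), hence is not a perfect square, so Weil's bound for character sums gives $\big|\sum_x\eta(\prod_{i\in I}(x-\Ga_i))\big|\le(|I|-1)\sqrt q$. Summing over all $I$, the total deviation from the main term $q$ is at most $\sqrt q\sum_{I}|I|=(k-1)2^{k-2}\sqrt q$; the correction terms (at most $k-1$ of them, each of size at most $2^{k-2}$) are similarly bounded by $(k-1)2^{k-2}$. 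Dividing by $2^{k-1}$ yields
\[ M_{k-1}\ \ge\ \frac{q}{2^{k-1}}-\frac{k-1}{2}\sqrt q-\frac{k-1}{2}. \]

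Finally I plug in the hypothesis. The right-hand side is decreasing in $k$, so it suffices to check positivity at $k=n$. Setting $Q=\sqrt q$, the bound $q\ge 4^n n^2$ gives $Q\ge 2^n n$, whence $q/2^{n-1}=Q^2/2^{n-1}\ge 2nQ$, and therefore $M_{n-1}\ge 2nQ-\tfrac{n-1}{2}Q-\tfrac{n-1}{2}=\tfrac{3n+1}{2}Q-\tfrac{n-1}{2}>0$. Hence $M_{k-1}>0$ for every $k\le n$, the greedy selection never gets stuck, and the resulting $\{\Ga_1,\dots,\Ga_n\}$ has all the required differences square. The main obstacle is the character-sum estimate: one must recognize that each $\prod_{i\in I}(x-\Ga_i)$ is squarefree and nonsquare so that Weil's theorem applies uniformly, and then manage the bookkeeping (the factor $2^{k-1}$ and the exclusion of $x\in\{\Ga_i\}$) carefully enough that the final inequality closes at exactly the stated threshold $q\ge 4^n\times n^2$.
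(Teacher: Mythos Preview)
Your argument is correct and follows essentially the same route as the paper: build the set greedily, and at each step use the quadratic character to count admissible extensions. The only difference is that the paper quotes the inequality \eqref{eq:3.1} from \cite[Exercise~5.64]{LN} (with the slightly sharper coefficient $\tfrac{n-3}{2}+\tfrac{1}{2^{n-1}}$ in front of $\sqrt{q}$), whereas you derive your bound $M_{k-1}\ge q/2^{k-1}-\tfrac{k-1}{2}\sqrt q-\tfrac{k-1}{2}$ directly from the inclusion--exclusion expansion and Weil's theorem; both estimates are comfortably enough to close the argument at the threshold $q\ge 4^n n^2$.
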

\begin{proof} If $q$ is even, it is clearly true as every element of $\F_q$ is a square.

Now assume that $q$ is odd. We prove it by induction on $n$. For $n=2$, we can let $S=\{0,1\}$. Suppose that there exists a subset $T=\{\Ga_1,\Ga_2,\dots,\dots,\Ga_{n-1}\}$ of $\F_q$ of size $n-1$ such that $\Ga_j-\Ga_i$
are nonzero square elements for all $1\le i<j\le n-1$.

Let $\Ga$ be a primitive element of $\F_q$ and let $\chi$ be the multiplicative quadratic character defined by $\chi(\Ga^i)=\Ga^{i(q-1)/2}$ and $\chi(0)=0$.
It is clear that $i$ is even if and only if $\chi(\Ga^i)=1$. Let $N$ denote the number of elements $\Gb$ of $\F_q$ such that $\chi(\Gb-\Ga_i)=1$ for all $i=1,2,\dots,n-1$. Then
by \cite[Exercise 5.64]{LN}, one has
\begin{equation}\label{eq:3.1}
\left|N-\frac{q}{2^{n-1}}\right|\le \left(\frac{n-3}{2}+\frac{1}{2^{n-1}}\right)\sqrt{q}+\frac{n-1}2.
\end{equation}
 Thus, by \eqref{eq:3.1} and our condition on $n$ and $q$, we have
\[N\ge \frac{q}{2^{n-1}}- \left(\frac{n-3}{2}+\frac{1}{2^{n-1}}\right)\sqrt{q}-\frac{n-1}2>0.\]
This implies that there exists an element $\Ga_n$ such that $\Ga_n-\Ga_i$ are nonzero square elements of $\F_q$ for all $i=1,2,\dots,n-1$. The proof is completed.
\end{proof}

\begin{thm}\label{thm:3.2} Let $n$ be an even integer. If $n$ and $q$ satisfy one of the following three conditions, then there exists a $q$-ary $[n,n/2,n/2+1]$-MDS self-dual code.
\begin{itemize}
\item[{\rm (i)}] $q$ is even and $n\le q$;
\item[{\rm (ii)}] $q\equiv 1\bmod{4}$, and $q\ge 4^n\times n^2$;
\item[{\rm (iii)}] $q$ is odd, $n=q+1$.
\end{itemize}
\end{thm}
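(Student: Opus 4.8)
The plan is to prove each of the three parts by exhibiting a generalized Reed-Solomon code that is self-dual, invoking the machinery already developed in Section 2. The unifying observation is that by Lemma \ref{lem:2.1}, every code $\GRS_{n/2}(\ba,\bv)$ is automatically a $q$-ary $[n,n/2,n/2+1]$-MDS code; hence in each case I only need to certify self-duality, and for this Corollary \ref{cor:2.3}(i) is the key tool. That corollary says: if I can find $\lambda\in\F_q^*$ such that $\lambda\prod_{1\le j\le n,\,j\neq i}(\Ga_i-\Ga_j)^{-1}$ is a nonzero square in $\F_q$ for every $i$, then $\GRS_{n/2}(\ba,\bv)$ is MDS self-dual, with $v_i$ a square root of that quantity. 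So the entire burden reduces to choosing the evaluation points $\Ga_1,\dots,\Ga_n$ appropriately.

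First I would dispatch part (i). When $q$ is even every element of $\F_q$ is a square, so for any choice of $n\le q$ distinct points $\Ga_1,\dots,\Ga_n$, the products $\prod_{j\neq i}(\Ga_i-\Ga_j)^{-1}$ are automatically squares; taking $\lambda=1$ in Corollary \ref{cor:2.3}(i) finishes this case immediately. For part (ii), the hypothesis $q\ge 4^n\times n^2$ is precisely the range in which Lemma \ref{lem:3.1} applies, so I would invoke it to obtain a set $S=\{\Ga_1,\dots,\Ga_n\}$ with every difference $\Ga_j-\Ga_i$ ($i<j$) a nonzero square. The remaining point is that the additional hypothesis $q\equiv1\bmod 4$ makes $-1$ a square in $\F_q$, so that $\Ga_i-\Ga_j=-(\Ga_j-\Ga_i)$ is a square for every ordered pair; consequently each full product $\prod_{1\le j\le n,\,j\neq i}(\Ga_i-\Ga_j)$ is a product of squares and hence a square, as is its inverse. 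Again $\lambda=1$ works and Corollary \ref{cor:2.3}(i) yields the code.

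Part (iii) is the case where I expect to depart from the ``squareness'' argument, since for $n=q+1$ one cannot have $n$ evaluation points inside $\F_q$; the natural object is the extended code. Here I would instead apply Corollary \ref{cor:2.3}(ii): with $q$ odd, the code $\GRS_{(q+1)/2}(\ba,\bi,\infty)$ is self-dual, where $\ba$ ranges over all $q$ elements of $\F_q$. By Lemma \ref{lem:2.1} this extended code has length $q+1=n$, dimension $(q+1)/2=n/2$, and is MDS, so it has exactly the claimed parameters $[n,n/2,n/2+1]$, and it is self-dual by the corollary. This completes all three parts.

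The genuinely delicate step is the $-1$-is-a-square observation in part (ii): Lemma \ref{lem:3.1} only controls the differences $\Ga_j-\Ga_i$ for $i<j$, whereas the product $\prod_{j\neq i}(\Ga_i-\Ga_j)$ appearing in Corollary \ref{cor:2.3}(i) mixes both orderings, and without $q\equiv 1\bmod 4$ the factors $\Ga_i-\Ga_j$ with $j<i$ need not be squares. Tracking the parity of the sign contributions and verifying that $q\equiv1\bmod4$ exactly neutralizes them is where care is required; everything else is a direct citation of the Section 2 results.
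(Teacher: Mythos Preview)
Your proof is correct and follows essentially the same approach as the paper: part (i) uses that every element is a square in characteristic two and applies Corollary~\ref{cor:2.3}(i); part (ii) invokes Lemma~\ref{lem:3.1} together with the observation that $-1$ is a square when $q\equiv 1\bmod 4$, then Corollary~\ref{cor:2.3}(i); and part (iii) is Corollary~\ref{cor:2.3}(ii). One minor point: the MDS property of the non-extended code $\GRS_{n/2}(\ba,\bv)$ is not Lemma~\ref{lem:2.1} (which concerns the extended code) but the remark just before it citing \cite[Theorem~9.1.4]{LX}.
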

\begin{proof} If $q$ is even, then every element of $\F_q$ is a square. Thus, Case (i) follows from Corollary \ref{cor:2.3}(i).

 By Lemma \ref{lem:3.1}, there
exists a subset $S=\{\Ga_1,\Ga_2,\dots,\Ga_n\}$ such that $\Ga_j-\Ga_i$ are square elements for all $1\le i<j\le n$. As
$q\equiv 1\bmod{4}$,  $-1$ is a square since $-1=\Ga^{(q-1)/2}$, where $\Ga$ is a primitive element of $\F_q$. Thus, $\Gb-\gamma$ is a nonzero square for
any two distinct elements $\Gb,\gamma\in S$. Therefore, $\prod_{1\le j\le n,  j\neq i}(\Ga_i-\Ga_j)^{-1}$ are nonzero square elements of $\F_q$ for all $i=1,2,\dots,n$.
Case (ii) follows from Corollary \ref{cor:2.3}(i) as well.

Case (iii) is the result of Corollary \ref{cor:2.3}(ii).
\end{proof}

\begin{rmk}\begin{itemize}
             \item [(i)]{\rm The results of Parts (i) and (iii) of Theorem \ref{thm:3.2} were given in \cite{GG08}. Here  a different proof is given.
}
             \item [(ii)] The result of Part (ii) of Theorem \ref{thm:3.2} implies that MDS self-dual code with length $n$ always exists when alphabet size $q$ is exponential in $n$.
           \end{itemize}
\end{rmk}
\subsection{MDS self-dual codes over $\F_q$ with $q=r^2$}

\begin{thm}\label{thm:3.4} Let $n$ be an even integer. If $n$ and $q=r^2$ satisfy one of the following two conditions, then there exists a $q$-ary $[n,n/2,n/2+1]$-MDS self-dual code.
\begin{itemize}
\item[{\rm (i)}] $n\le r$;
\item[{\rm (ii)}] $q$ is odd and $n-1$ is a divisor of $q-1$.
\end{itemize}
\end{thm}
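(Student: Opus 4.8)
The plan is to verify both parts by exhibiting, in each case, $n$ distinct elements $\Ga_1,\dots,\Ga_n$ of $\Fq=\F_{r^2}$ such that the Vandermonde-type system $A_{\ba}\bx=\bo$ from \eqref{eq:2.4} has a nonzero solution lying entirely in the subfield $\F_r$. Once this is achieved, Lemma \ref{lem:3.3} applies directly: a nonzero $\F_r$-solution $\bw=(w_1,\dots,w_n)$ has every coordinate $w_i\in\F_r$, and since every element of $\F_r$ is a square in $\F_{r^2}$ (the norm/squaring map makes $\F_r\subseteq(\F_{r^2}^*)^2\cup\{0\}$), we may write $w_i=v_i^2$ with $v_i\in\F_{r^2}$, whence $\GRS_{n/2}(\ba,\bv)$ is MDS self-dual. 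So the whole theorem reduces to a subfield-solvability statement for the two prescribed choices of evaluation points.

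For Part (ii) the construction is essentially Example \ref{exm:2.6}. Here $m:=n-1$ divides $q-1=r^2-1$, so the $m$th roots of unity $\Ga_2,\dots,\Ga_n$ are well-defined and distinct; I would take $\Ga_1=0$ together with these $m$ roots, giving $n$ distinct points. I would then invoke Lemma \ref{lem:2.5}: it suffices to check that $A_{\ba}^{(r)}$ and $A_{\ba}$ are row equivalent. The first row (all ones) is fixed, and the column corresponding to $\Ga_1=0$ is fixed, so the content reduces to the last $m-1$ rows, whose exponents are $\{1,2,\dots,m-1\}$ for $A_{\ba}$ and $\{r,2r,\dots,(m-1)r\}\pmod m$ for $A_{\ba}^{(r)}$ (using $\Ga_i^m=1$). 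Since $\gcd(r,m)=1$ — because $m\mid r^2-1$ forces $m$ coprime to $r$ — multiplication by $r$ permutes the nonzero residues mod $m$, so the two exponent multisets coincide and the matrices are row equivalent. This is the step I would present in full, as it is the crux, though it is short.

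For Part (i), with $n\le r$ and $n$ even, the cleanest choice is to take $\Ga_1,\dots,\Ga_n$ to be $n$ distinct elements of the subfield $\F_r$ itself (possible since $n\le r=|\F_r|$). Then $A_{\ba}$ already has all entries in $\F_r$, so $A_{\ba}^{(r)}=A_{\ba}$ and they are trivially row equivalent; by Lemma \ref{lem:2.5} a nonzero $\F_r$-solution exists, exactly as in Example \ref{exm:2.7}. Lemma \ref{lem:3.3} then finishes the case. The evenness of $n$ is needed only so that $n/2$ is an integer and $\GRS_{n/2}$ makes sense as the self-dual dimension.

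I expect the main obstacle to be purely bookkeeping rather than conceptual: in Part (ii) one must be careful that $\Ga_1=0$ does not collide with any root of unity (it does not, since $0$ is not an $m$th root of unity) and that the permutation argument for the exponents is stated for the reduced $(m-1)\times(m-1)$ block rather than the full matrix, accounting correctly for the fixed first row and the fixed zero-column. The genuinely substantive point is the coprimality $\gcd(r,m)=1$ guaranteeing that $t\mapsto rt \bmod m$ is a bijection of $\{1,\dots,m-1\}$; everything else is a direct appeal to the lemmas already established, so no delicate estimates or new machinery are required.
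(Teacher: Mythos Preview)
Your proposal is correct and follows essentially the same route as the paper: for Part~(i) you pick the evaluation points in the subfield $\F_r$ (exactly Example~\ref{exm:2.7}) and for Part~(ii) you use $0$ together with the $(n-1)$th roots of unity (exactly Example~\ref{exm:2.6}), then in both cases apply Lemma~\ref{lem:3.3}. The extra details you spell out (the coprimality $\gcd(r,m)=1$ and the fact that every element of $\F_r$ is a square in $\F_{r^2}$) are precisely the points the paper either handles inside Example~\ref{exm:2.6} or absorbs into the proof of Lemma~\ref{lem:3.3}.
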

\begin{proof} In case (i), we can choose $n$ distinct elements $\Ga_1,\Ga_2,\dots,\Ga_n$ of $\F_r$. Then the system $A_{\ba}\bx=\bo$ has a nonzero solution
in $\F_r^n$, By Lemma \ref{lem:3.3}, there exists a $q$-ary $[n,n/2,n/2+1]$-MDS self-dual code.

As $n-1$ is a divisor of $q-1$, by Example \ref{exm:2.6}, we can find $n$ distinct elements  $\Ga_1,\Ga_2,\dots,\Ga_n$ of $\F_{q}$ such that
 the system $A_{\ba}\bx=\bo$ has a nonzero solution in $\F_r^n$. Thus, Case (ii) follows from Lemma \ref{lem:3.3}.
This completes the proof.
\end{proof}

\begin{thm}\label{thm:3.5} Let $q=r^2$ and  $r\equiv 3\bmod{4}$, then there exists  a $q$-ary $[2tr,tr,tr+1]$-MDS self-dual code for any $1\le t\le (r-1)/2$.
\end{thm}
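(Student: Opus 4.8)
The plan is to reduce everything, as in the previous two theorems, to producing a good evaluation set: for the target even length $n=2tr$ I will exhibit $n$ distinct points $\Ga_1,\dots,\Ga_n\in\F_{r^2}$ for which the homogeneous system $A_{\ba}\bx=\bo$ of \eqref{eq:2.3} has a \emph{nonzero} solution lying in the subfield $\F_r$. Once such points are found, Lemma \ref{lem:3.3} immediately furnishes the required MDS self-dual code $\GRS_{n/2}(\ba,\bv)$ with parameters $[2tr,tr,tr+1]$. By Lemma \ref{lem:2.1a} the solution space is one-dimensional, spanned by $\bu=(u_1,\dots,u_n)$ with $u_i=\prod_{j\neq i}(\Ga_i-\Ga_j)^{-1}$, so by Lemma \ref{lem:2.4} the subfield-solvability I need is exactly the statement that this space is stable under the coordinatewise $r$th power map, i.e.\ that $\bu^r$ is again a solution; equivalently $u_i^{\,r-1}$ should be independent of $i$.

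First I would fix a basis $\{1,\theta\}$ of $\F_{r^2}$ over $\F_r$, so that $\theta\notin\F_r$ and $\theta^r\neq\theta$. Since $t\le (r-1)/2$ we have $2t\le r-1<|\F_r|$, so I may pick $2t$ distinct scalars $b_1,\dots,b_{2t}\in\F_r$ and take the evaluation set to be the union of the corresponding additive cosets of $\F_r$,
\[
\{\,\Ga_1,\dots,\Ga_n\,\}=\{\,a+b_s\theta:\ a\in\F_r,\ 1\le s\le 2t\,\},
\]
which has exactly $n=2tr$ distinct elements. The virtue of this choice is that a difference of two points separates cleanly into its $\F_r$-part and its $\theta$-part.

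The heart of the argument is the evaluation of $\delta_i:=\prod_{j\neq i}(\Ga_i-\Ga_j)$ for a fixed $\Ga_i=a_0+b_0\theta$. I would split the product according to whether another point lies in the same coset as $\Ga_i$ or in a different one. The same-coset factor is $\prod_{c\in\F_r^*}c=-1$, while for each of the remaining $2t-1$ cosets the inner product over $a\in\F_r$ collapses, via the identity $\prod_{c\in\F_r}(x-c)=x^r-x$ evaluated at $x=(b_0-b_s)\theta$, to $(b_0-b_s)(\theta^r-\theta)$. Hence
\[
\delta_i=-\,(\theta^r-\theta)^{2t-1}\prod_{s:\,b_s\neq b_0}(b_0-b_s),
\]
and the only factor that is not a nonzero element of $\F_r$ is the \emph{fixed} power $(\theta^r-\theta)^{2t-1}$. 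Raising to the $(r-1)$th power kills every $\F_r^*$-factor (including the sign), so $\delta_i^{\,r-1}=(\theta^r-\theta)^{(2t-1)(r-1)}$ is the same for all $i$. Consequently $u_i^{\,r-1}=\delta_i^{-(r-1)}$ is constant in $i$, so all ratios $u_i/u_1$ lie in $\F_r^*$ and $u_1^{-1}\bu$ is a nonzero solution of $A_{\ba}\bx=\bo$ in $\F_r^n$.

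I expect this product computation to be the main obstacle: one must correctly organize the $n-1$ difference factors into intra-coset and inter-coset groups, recognize the inter-coset products through $\prod_{c\in\F_r}(x-c)=x^r-x$, and then verify that after the $(r-1)$th power the entire $b_0$-dependence collapses, leaving a quantity independent of $i$. Once uniformity of $\delta_i^{\,r-1}$ is in hand the proof closes mechanically: each $u_i\neq0$, so the $\F_r$-solution has all coordinates nonzero; every element of $\F_r$ is a square in $\F_{r^2}$; and Lemma \ref{lem:3.3} delivers the MDS self-dual $[2tr,tr,tr+1]$ code. I would finally note that this calculation uses only $r$ odd and $2t\le r$, so the hypothesis $r\equiv3\bmod 4$ enters only through the ambient setting and the construction in fact applies to every odd $r$.
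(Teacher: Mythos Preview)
Your argument is correct, and it shares the paper's core ingredients: the evaluation set is the union of $2t$ additive $\F_r$-cosets in $\F_{r^2}$, and the product $\delta_i=\prod_{j\neq i}(\Ga_i-\Ga_j)$ is split into the intra-coset factor $\prod_{c\in\F_r^*}c=-1$ and the inter-coset blocks, each collapsed via $\prod_{c\in\F_r}(x-c)=x^r-x$ to $(b_0-b_s)(\theta^r-\theta)$.

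The difference lies in how self-duality is extracted from this factorization. The paper invokes Corollary~\ref{cor:2.3}(i) directly: it fixes the specific coset shift $\Gb=\gamma^{(r+1)/2}$ with $\gamma$ primitive in $\F_q$, so that $\Gb^{r-1}=-1$ and each inter-coset block becomes $(a_{\ell_0}-a_\ell)\cdot(-2)\cdot\Gb$. The factors in $\F_r^*$ are automatically squares in $\F_{r^2}$, and $\Gb$ is a square exactly when $(r+1)/2$ is even, i.e.\ when $r\equiv3\pmod4$; this is precisely where the paper consumes that hypothesis. You instead route through the subfield criterion of Lemma~\ref{lem:3.3}: since $\delta_i=(\text{element of }\F_r^*)\times(\theta^r-\theta)^{2t-1}$, raising to the $(r-1)$th power kills the $\F_r^*$-part and leaves a quantity independent of $i$, forcing $u_1^{-1}\bu\in\F_r^n$. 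No condition on $\theta$ beyond $\theta\notin\F_r$ is needed.

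Your closing observation is therefore correct and is a genuine strengthening: your proof uses only that $r$ is odd and $2t\le r-1$, so the conclusion in fact holds for $r\equiv1\pmod4$ as well, whereas the paper's direct square argument cannot dispense with $r\equiv3\pmod4$.
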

\begin{proof} Label elements of $\F_r$ by $\{a_1,a_2,\dots,a_{r}\}$. Assume that $\gamma$ is a primitive element of $\F_{q}$ and let $\Gb=\gamma^{(r+1)/2}$. Put $n=2tr$ and $\Ga_{\ell r+k}=a_{\ell}\Gb+a_k$ for all $1\le \ell\le 2t$ and $1\le k\le r$. By Corollary \ref{cor:2.3}(i), it is sufficient to show that $\prod_{1\le j\le n,  j\neq i}(\Ga_i-\Ga_j)^{-1}$ is a square of $\F_{q}$ for all $1\le i\le n$.

Write $i=\ell_0 r+k_0$ for some $1\le \ell_0\le 2t$ and $1\le k_0\le r$. Then
\begin{equation}\label{eq:3.2}v_{\ell_0}:=\prod_{\ell_0r+1\le j\le \ell_0r+r,  j\neq \ell_0r+k_0}(\Ga_{\ell_0 r+k_0}-\Ga_j)=\prod_{1\le j\le r,  j\neq k_0}(a_{k_0}-a_j)\in\F_r.\end{equation}
Thus, $v_{\ell_0}$ is a square in $\F_{q}$ since it is an element of $\F_r$. Furthermore, for $\ell\neq\ell_0$, we have
 \begin{equation}\label{eq:3.3}v_{\ell}:=\prod_{\ell r+1\le j\le \ell r+r}(\Ga_{\ell_0 r+k_0}-\Ga_j)=\prod_{1\le j\le r}((a_{\ell_0}-a_{\ell})\Gb+a_{k_0}-a_j)=((a_{\ell_0}-a_{\ell})\Gb)^r-(a_{\ell_0}-a_{\ell})\Gb=(a_{\ell_0}-a_{\ell})\Gb(\Gb^{r-1}-1).\end{equation}
This implies that $v_{\ell}$ is a square in $\F_{q}$ as well since $a_{\ell_0}-a_{\ell}$ and  $\Gb^{r-1}-1=-2$ are elements of $\F_r$ and $\Gb=\gamma^{(r+1)/2}$ is a square. Our result follows from the fact that $\prod_{1\le j\le n,  j\neq i}(\Ga_i-\Ga_j)^{-1}=\prod_{\ell=1}^{2t} v_{\ell}^{-1}$.
\end{proof}

\end{document}